\newtheorem{definition}{Definition}
\newtheorem{theorem}{Theorem}
\newenvironment{example}{\myexample}{\qed\endmyexample}
\def\qed{\endIEEEproof}
\DeclareMathOperator{\rank}{\sf rank\hspace{0.1em}}
\DeclareMathOperator{\supp}{\sf supp}
\newcommand{\Fq}{\mathbb{F}_q}
\newcommand{\mat}[1]{\begin{bmatrix} #1 \end{bmatrix}}
\newcommand{\calC}{\mathcal{C}}
\title{Sparse Network Coding with Overlapping Classes}
\author{
\IEEEauthorblockN{Danilo Silva\hspace{0.05em}${}^1$, Weifei Zeng%\hspace{0.05em}${}^2$%
, and Frank R. Kschischang\\}
\IEEEauthorblockA{Department of Electrical and Computer Engineering, University of Toronto \\
Toronto, Ontario M5S 3G4, Canada\\
%{\tt weifei.zeng@utoronto.ca, \{danilo, frank\}@comm.utoronto.ca}
{\tt danilo@comm.utoronto.ca, weifei.zeng@utoronto.ca, frank@comm.utoronto.ca}
}
%\and
%\authorblockN{Ralf Koetter}
%\authorblockA{Institute for Communications Engineering\\
%TU Munich\\
%D-80333 Munich\\
%{ralf.koetter@tum.de}}
\thanks{${}^1$ Supported by CAPES Foundation, Brazil.}
%\thanks{${}^2$ Supported by ???.}
\vspace{-3ex}
}
\begin{document}
\maketitle
\thispagestyle{empty}

\begin{abstract}

This paper presents a novel approach to network coding for distribution of large files. Instead of the usual approach of splitting packets into disjoint classes (also known as generations) we propose the use of \emph{overlapping} classes. The overlapping allows the decoder to alternate between Gaussian elimination and back substitution, simultaneously boosting the performance and reducing the decoding complexity.
Our approach can be seen as a combination of fountain coding and network coding.
Simulation results are presented that demonstrate the promise of our approach.

\end{abstract}

\section{Introduction}

Network coding \cite{Ahlswede++2000,Li++2003,Koetter.Medard2003,Ho++2006,Chou++2003} is a promising approach to data dissemination over networks. In this past decade, several works have attempted to establish the potential of this simple and yet seemingly revolutionary idea in a variety of applications \cite{Gkantsidis.Rodriguez2005:Avalanche,Gkantsidis++2006:LiveP2PSystem,Katti++2008:XORs,Fragouli++2008:EfficientBroadcasting,Wang.Li2007:R2,Dimakis++2007:INFOCOM}. While the success of network coding for streaming media and wireless applications has been encouraging, it is still unclear whether this approach is beneficial for peer-to-peer file dissemination \cite{Chiu++2006:CanNetworkCodingHelpP2P}. The present paper is an initial attempt to fill this gap.

%While network coding has been successfully applied to streaming media scenarios, several challenges arise when one attempts to use network coding for peer-to-peer file distribution.

One major issue is decoding complexity.
In the file-downloading scenario, a large file of $km \log_2 q$ bits is to be distributed among cooperating peers in a network. The file is partitioned into $k$ packets, each consisting of $m$ symbols over a finite field $\Fq$. If random linear network coding \cite{Ho++2006} is used to distribute the file, then each receiver has to solve a linear system with $k$ equations in order to decode the file. This requires $O(k^3 + k^2m)$ operations in $\Fq$, which may be prohibitively expensive in practice.

To reduce the decoding complexity, Chou \emph{et al.} \cite{Chou++2003} proposed to group packets into disjoint \emph{generations}, each containing $d$ packets, and apply network coding only within each generation. The complexity issue is solved if $d$ is small, but another problem is created: that of efficiently routing $L = k/d$ generations throughout the network.

Note that simply choosing a small $k$ and compensating the file size by using a large packet length $m$, as done in \cite{Gkantsidis++2006:LiveP2PSystem}, may not be a satisfactory solution. Transmitting such large packets (of, say, 1--4 MBytes \cite{Gkantsidis++2006:LiveP2PSystem}) over a \emph{dynamic} peer-to-peer network---where peers may interrupt transmissions or leave the network at any time---is a highly nontrivial problem. Since each \emph{coded} packet is essentially \emph{unique}, interrupted transmissions are useless to a receiving peer, potentially causing a severe waste of bandwidth. Thus, we find it more realistic to assume that $m$ is small and $L$ is large.

%At one hand, one could choose a relatively small $L$, and accommodate the file by using a large packet length $m$. Unfortunately, this creates a highly nontrivial problem: transmitting large, coded packets (of, say, ??? bytes []) in a dynamic peer-to-peer network (where peers may interrupt transmissions or leave the network at any time). Since each coded packet is essentially unique, interrupted transmissions are useless to the receiving peer, potentially causing a severe waste of bandwidth.

%However, transmitting such rather large packets (of, say, ??? bytes) between peers is a nontrivial problem, which is aggravated by the fact that, in a dynamic network, peers can interrupt transmissions or leave the network at any time. Since each coded packet is essentially unique, interrupted transmissions are useless to the receiving peer, potentially causing a severe waste of bandwidth.

%If, instead, one wishes to avoid this problem by using a small $m$, then $L$ must be large, which in turn creates a scheduling problem among generations.

%On the other hand, if this problem is avoided by using a small $m$ and a large $L$, then a scheduling problem among generations must be solved.
Probably the most successful approach so far to routing pieces of a file through a peer-to-peer network is the BitTorrent protocol \cite{BitTorrent}. The drawback of this and similar protocols is that a large number of control messages must be exchanged between peers, mainly to resolve the problems of \emph{rare blocks} and \emph{block reconciliation} \cite{Bharambe++2006:BitTorrentPerformance,Xu++2008:Swifter}. Thus, the protocol overhead is substantial, and a significant amount of research has been devoted to trying to alleviate this problem \cite{Bharambe++2006:BitTorrentPerformance}.

The solution proposed by Maymounkov \emph{et al.} \cite{Maymounkov++2006:ChunkedCodes}, in the context of generation-based network coding, completely eliminates any protocol overhead: peers randomly choose the generation from which to transmit each packet. This scheme is called \emph{chunked coding}. Intuitively, the scheme replaces %Intuitively, this scheme---called chunked coding---replaces
protocol overhead with transmission overhead.
While the scheme is shown to have a good performance asymptotically, the performance quickly deteriorates for practical values of $d$.

%Indeed, while the performance is shown to be good asymptotically, it quickly deteriorates for practical values of $d$.

%peers randomly choose the generations from which to transmit packets, irrespectively of the state of the receiving peers.

%each peer simply transmit generations chosen uniformly at random.
%In order to eliminate the protocol overhead, Mayam... proposed
%Their scheme, called chunked coding, is shown to have a good performance asymptotically. However, for practical values of $d$, the scheme has a large transmission overhead.

%The issue of decoding complexity has been dealt with in the fountain coding literature by using belief propagation decoding with optimized degree distributions.
A related line of work is fountain coding \cite{Shokrollahi2006}. By using optimized degree distributions, fountain codes such as LT or raptor codes can achieve a relatively small overhead with a low-complexity back-substitution decoder \cite{Shokrollahi2006}.
These schemes, however, are not compatible with network coding.
%, as packets must travel intact throughout the network---otherwise the degree distributions would be completely destroyed.
To maintain the designed degree distributions, packets must travel intact throughout the network---otherwise, the decoder would fail miserably.

% In order to use fountain coding in a network, we would have to use routing. Is it possible to combine fountain coding and network coding? Chunked codes offered this promise, but, for a fixed decoding complexity, the scheme incurs a large overhead (i.e., hardly achieves capacity). Moreover, a main difference between chunked coding and fountain coding is that %the former never makes use of back-substitution. the latter makes use of back-substitution to simplify the decoding, while the former does not.

This paper investigates the following question: is it possible to use a true network coding approach and yet enjoy a low-complexity fountain-like decoder? The approach proposed here answers this question affirmatively, and can be seen as a combination of fountain coding and network coding. Our idea is to follow the approach of chunked coding, but instead use a larger number of \emph{overlapping} generations (here called \emph{classes}). Overlapping generations allow packets from decoded generations to be back-substituted into still undecoded generations, in the same spirit of a fountain decoder. This not only boosts the performance but also reduces the decoding complexity of the scheme.

The remainder of the paper is organized as follows. In Section~\ref{sec:preliminaries}, we review previous work on network coding in a way that simplifies the description of our codes and emphasizes the existing connections. Section~\ref{sec:sparse-overlapping} presents our approach, including the description of the decoder and bounds on the decoding complexity. In Section~\ref{sec:examples}, we present some code constructions, whose performance is evaluated in Section~\ref{sec:performance} and compared with that of chunked codes. Finally, Section~\ref{sec:conclusion} presents some concluding remarks.

%For a full version of this paper, see \cite{Silva++2009:Overlapping-FullVersion}.

\section{Preliminaries}
\label{sec:preliminaries}

\subsection{Random Linear Network Coding}
\label{ssec:random-linear-network-coding}

Consider a communication network represented by a directed multigraph (cyclic or acyclic). The network is used to transport $k$ \emph{data} (or \emph{uncoded}) \emph{packets} $u_1,\ldots,u_k$ from a single source node to multiple destination nodes. Packets are regarded as vectors of length $m$ over a finite field $\Fq$. Each edge in the network is assumed to transport a \emph{single} packet, free of errors. To describe the operation of the network,
we associate with each edge $e$ a tuple $(P_e,t_e^-,t_e^+)$; if $e$ is an edge from a node $v^-$ to a node $v^+$, then this tuple indicates that packet $P_e$ was transmitted by $v^-$ at time $t_e^-$ and was received by $v^+$ at time $t_e^+$. We may also say that $P_e$ is an outgoing packet of $v^-$ and an incoming packet of $v^+$. For consistency, we assume that the data packets $u_i$ were \emph{received} by the source node at time $-\infty$.

The computation performed at the nodes must satisfy the \emph{law of (causal) information flow}: a packet transmitted by a node must be computed as a function of packets \emph{previously} received by that node. A (causal) schedule for a network is a specification of all the time values $t_e^-$, $t_e^+$ satisfying the constraint $t_e^- < t_e^+$.

Given a network and a schedule, a \emph{network code} is the specification of all functions computed at all nodes. In a \emph{linear network code} \cite{Li++2003,Koetter.Medard2003}, all such functions are constrained to be $\Fq$-linear combinations. This implies that any packet $P_e$ transmitted over the network can be expressed as a unique linear combination of data packets, say, $P_e = \sum_{i=1}^k g_{e,i} u_i$. The coefficient vector $g_e = (g_{e,1},\ldots,g_{e,k}) \in \Fq^k$ is called the \emph{(global) coding vector} of $P_e$.%, and is denoted $g(z) = g$.

Let $x_1,\ldots,x_N$ denote the outgoing packets of the source node, and let $y_1,\ldots,y_n$ denote the incoming packets of some destination node. Due to the linearity of the network code, these packets can be related by
\begin{equation}\label{eq:matrix-model}
  Y = AX = ABU
\end{equation}
where $U \in \Fq^{k \times m}$, $X \in \Fq^{N \times m}$ and $Y \in \Fq^{n \times m}$ are matrices whose rows are the packets $u_i$, $x_i$ and $y_i$, respectively, and $A \in \Fq^{n \times N}$ and $B \in \Fq^{N \times k}$. The matrix $AB$ is called the \emph{transfer matrix} of the network.

Note that successful decoding is possible if and only if $\rank AB = k$. In this case, the network code is said to be \emph{feasible}. Let $k^*$ denote the maximum rank of $A$ among all choices of the network code. Clearly, a feasible network code exists only if $k \leq k^*$, a condition we assume hereafter.

%A practical yet effective way of choosing the network code is a random choice.
In \emph{random linear network coding} \cite{Ho++2006}, nodes choose the coefficients of the linear combinations uniformly at random from $\Fq$ and independently from each other. As shown in \cite{Ho++2006}, a random network code is feasible with high probability if the field size $q$ is sufficiently large.

In order for the destination node to be able to undo the multiplication by $AB$ (which is unknown a priori) and recover $U$, the usual approach is to record the transfer matrix as part of the matrix $Y$ through the use of packet headers; more precisely, the left portion of $U$ is assumed to be a $k \times k$ identity matrix. Note that this leaves space for only $m' = m - k$ data symbols in each data packet, i.e., the effective throughput is scaled by $\frac{m-k}{m}$. In practice, one must choose $m' \gg k$.

Decoding corresponds to applying Gauss-Jordan elimination on $Y$ to convert it to reduced row echelon form. Note that only $k$ linearly independent rows of $Y$ are effectively needed. Performing Gauss-Jordan elimination on a $k \times (k+m')$ matrix requires $k^2 m' + \frac{1}{2} k^2(k-1)$ multiplications and a similar number of additions\footnote{Note that asymptotically fast methods are only useful for very large parameters (much larger than those consider in this paper).}. We will ignore the number of additions since the time to perform an addition is usually negligible compared to the time to perform a multiplication. We also ignore the second term in the operation count since, as discussed above, $m' \gg k$ in any realistic scheme. Thus we may say that the decoding complexity of random linear network coding is $k$ operations per data symbol.

Due to the fact that the transfer matrix $AB$ is dense, this scheme is also called \emph{dense network coding}.

\subsection{Sparse Network Coding with Disjoint Classes}
\label{ssec:sparse-disjoint}

For large $k$, dense network coding is computationally too expensive in practice. A way to alleviate this problem is to ensure that $AB$ has a sparse structure. The main difficulty is that this constraint must be not only imposed at the source node, but also coordinated among all the internal nodes---which must still be able to perform network coding.

An approach proposed in \cite{Chou++2003} is to divide packets into disjoint \emph{classes} (or \emph{generations} \cite{Chou++2003}, \emph{groups} \cite{Gkantsidis++2006:LiveP2PSystem}, \emph{segments} \cite{Wang.Li2007:R2}, \emph{chunks} \cite{Maymounkov++2006:ChunkedCodes}). Suppose that $k = L d$. For $i = 1,\ldots,k$, let us say that a packet $u_i$ belongs to class $\ell$ if $i \in \{(\ell-1)d+1,\ldots,\ell d\}$. Now, the rule that is enforced at each network node is that \emph{only packets of the same class are allowed to be combined}, producing a new packet of the same class. Under this constraint, expression (\ref{eq:matrix-model}) can be rewritten
%(up to reordering of rows)
as
%\begin{equation}\nonumber
%  \mat{Y^{(1)} \\ \vdots \\ Y^{(L)}} = \mat{A^{(1)} & 0 & 0 \\ 0 & \ddots & 0 \\ 0 & 0 & A^{(L)}} \mat{B^{(1)} & 0 & 0 \\ 0 & \ddots & 0 \\ 0 & 0 & B^{(L)}} \mat{U^{(1)} \\ \vdots \\ U^{(L)}}
%\end{equation}
\begin{equation}\nonumber
  Y^{(\ell)} = A^{(\ell)}B^{(\ell)}U^{(\ell)}, \quad \ell=1,\ldots,L
\end{equation}
where
\begin{equation}\nonumber
  U^{(\ell)} = \mat{u_{(\ell-1)d+1} \\ \vdots \\ u_{\ell d}} \quad \text{and} \quad Y^{(\ell)} = \mat{y_{1}^{(\ell)} \\ \vdots \\ y_{n_\ell}^{(\ell)}},
\end{equation}
and where $y_j^{(\ell)}$, $j=1,\ldots,n_\ell$, $\ell=1,\ldots,L$, are the received packets.
Note that this is essentially splitting the network into $L$ parallel smaller networks. Due to the block-diagonal structure of $AB$, decoding can now be performed in $\frac{1}{km} L d^2 m = d$ operations per symbol, which may be a dramatic improvement if $L$ is large.

Increasing $L$ also reduces the overhead in transmitting packet headers. Rather than $k$ symbols per packet, the overhead is now only $\lceil \log_q L \rceil + d$ symbols per packet, corresponding to a class index plus a coding vector.

The performance of this scheme, however, reduces as $L$ increases.
This is mainly due to the following reasons. First, separating flows into disjoint classes reduces the diversity of source-destination paths, which may reduce the min-cut of the network (and therefore $k^*$). Second, the fact that fewer packets are combined together within each class may increase the probability of linear dependency among received packets. Third, differently from the $L=1$ case, nodes have to \emph{choose} the class from which to produce a new packet at each transmission opportunity. This implies that the induced network topology is chosen by the nodes on-the-fly, and poor choices may lead to a poor overall system. Fourth, the decoding condition is ``$L$~times more constrained:'' decoding is successful if and only if $\rank A^{(\ell)}B^{(\ell)} = d$ for $\ell=1,\ldots,L$.

The first and second problems are mitigated if $k$ and $q$, respectively, are sufficiently large. For the third problem, different strategies have been proposed, most of which require exchange of control messages. We will focus here on the strategy proposed in \cite{Maymounkov++2006:ChunkedCodes}, which eliminates any need for feedback:
%
%. One approach in [] is to transmit one class after another, but this is inefficient in the file downloading scenario: a node that lacks a single packet in a class has to wait until a new transmission cycle beings (yielding a very large overhead). If nodes can communicate through a control channel, then specific classes may be transmitted at the request of downstream nodes []; however, control messages increase the communication overhead, which may again lead to inefficiency if $L$ is large.
%
%The strategy proposed in [] eliminates any need for feedback:
nodes simply choose classes uniformly at random among previously received classes. This scheme is referred to as \emph{chunked coding}. The drawback of this approach is that it exacerbates the fourth problem. A node may unnecessarily receive packets from a class that has already been fully decoded, while other classes are still incomplete; this in turn requires $n$ to be much larger than $k$. The results in \cite{Maymounkov++2006:ChunkedCodes} show that the overhead $(n-k)/k$ can be made comparatively small by choosing $d = \ln^2 k$ and letting $k$ be sufficiently large. In practice, however, such a large $d$ defeats the purpose of sparse network coding, since the decoding complexity becomes prohibitively large.

The bottom line for this approach of dividing packets into disjoint classes is that it simply \emph{postpones} the scheduling problem: now \emph{classes} have to be routed, rather than individual packets. Thus, if $L$ is large, the same criticisms for any routing (non-network-coding) approach also apply here.

\section{Sparse Network Coding with Overlapping Classes}
\label{sec:sparse-overlapping}

%network-coding-compatible fountain codes

In this section we present a novel scheme that attempts to overcome the drawbacks of chunked coding. From one perspective, the scheme can be seen as a fountain code that is fully compatible with network coding.

In the following, the term \emph{class} refers to a non-empty subset of $\{1,\ldots,k\}$. A \emph{class-based scheme} for network coding is specified by a set of classes, $\calC = \{C_1,\ldots,C_L\}$, and a probability distribution on classes, $\{p_1,\ldots,p_L\}$. When $\calC$ is understood, we may write \emph{class $\ell$} as a shorthand for \emph{class $C_\ell$}.
Let $\supp(g)$ be the \emph{support} of a vector $g \in \Fq^k$, i.e., $\supp(g) = \{i \in \{1,\ldots,k\} \colon {g_i \neq 0}\}$. For a packet $x \in \Fq^m$ with coding vector $g \in \Fq^k$, we say that $x$ \emph{belongs} to class $\ell$ if $\supp(g) \subseteq C_\ell$. Let $\lambda(x)$ denote the set of indices of all the classes to which a packet $x$ belongs, i.e., $\lambda(x) \triangleq \{\ell \in \{1,\ldots,L\} \colon \supp(g) \subseteq C_\ell\}$. With a slight abuse of terminology, we will usually refer to a \emph{class $\ell$} to mean all the \emph{data packets} belonging to that class.

Note that, in general, a packet $u_i$ may belong to multiple classes; for instance, we might have $C_1 \cap C_2 = \{i\}$, which implies that $\{1,2\} \subset \lambda(u_i)$. When two classes have non-empty intersection, we will say that these classes \emph{overlap}.

Given a class-based network coding scheme $(\calC,\{p_\ell\})$, every node in the network (including the source node) performs, at each transmission opportunity, the following encoding procedure. First, a class index $\ell$ is randomly selected according to $\{p_\ell\}$. If no packets from that class have yet been received, then the process is repeated until an index $\ell$ is selected such that some packet from class $\ell$ has been received. Then, an outgoing packet is computed as a random linear combination of received packets from class $\ell$.

Let $d_\ell = |C_\ell|$, for $\ell = 1,\ldots,L$. It should be clear that the chunked coding scheme described in Section~\ref{ssec:sparse-disjoint} corresponds to the special case where $\calC$ is a partition of $\{1,\ldots,k\}$, with $d_\ell = d = k/L$, and $\{p_\ell\}$ is uniform. In general, due to the presence of overlapping classes, we may have $\sum_{\ell = 1}^L d_\ell > k$.

Let us now describe the decoding process. For $\ell = 1,\ldots,L$, let $Y^{(\ell)}$ consist of the received packets from class $\ell$, and let $r_\ell = \rank Y^{(\ell)}$. We view $Y^{(\ell)}$, and therefore $r_\ell$, as variables that are constantly updated as new packets are received; in particular, we call the tuple $(r_1,\ldots,r_L)$ the \emph{state} of the receiver. In the context of a decoding process, we say that a class $\ell$ is \emph{decodable} if $r_\ell \geq d_\ell$ and that it \emph{has been decoded} if all the data packets $u_i$ belonging to $\ell$ have been recovered. Decoding starts from some decodable class $\ell$ that has not yet been decoded. This class is decoded by Gaussian elimination. Then, similarly to the decoding of fountain codes, any data packets $u_i$ belonging to $C_\ell$ are \emph{back-substituted} into any overlapping classes, and the ranks $r_1,\ldots,r_\ell$ are recomputed. For instance, if $u_i$ belongs to classes $1$ and $2$, and class $1$ is decoded, then we may imagine that a new packet $y^{(2)}_{n_2+1} = u_i$ has been received. Unless class $2$ has already been decoded, this has the effect of increasing $r_2$ by one unit. The process is then repeated until all classes have been decoded---which is to say that all data packets $u_i$ have been obtained.

The essence of the decoding process is similar to solving a \emph{crossword puzzle}: when a word is ``decoded,'' the recovered letters can be reused to help in the decoding of any overlapping words. Indeed, the idea of a crossword puzzle gives the basis for the simplest nontrivial overlapping scheme, which we call \emph{grid codes}. A simple example of a grid code is given in Fig.~\ref{fig:grid2x2}.
\begin{figure}
\centering
  \subfloat[Chunked code]{
    \includegraphics[scale=1.3]{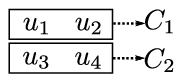}
    \label{fig:chunk2x2}}
    \quad
  \subfloat[Grid code]{
    \includegraphics[scale=1.3]{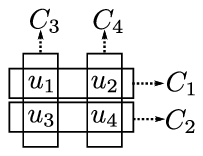}
    \label{fig:grid2x2}}
  \caption{An example of a $2 \times 2$ grid code.}
  \vspace{-2ex}
\end{figure}
A general definition will be presented in Section~\ref{sec:examples}.

\medskip
\begin{example}
  Let $k=4$. The $2 \times 2$ grid code of Fig.~\ref{fig:grid2x2} can be seen as the chunked code $\{C_1,C_2\}$ of Fig.~\ref{fig:chunk2x2} with two extra classes $C_3$ and $C_4$. Let us assume that, in either case, packets from each class are received with equal probability and all received packets are innovative. Suppose that, initially, two packets from $C_1$ have been received, i.e., $r_1 = 2$, so that the decoder is in state $(2,0,0,0)$. For the chunked code to succeed with no overhead, it is necessary that the next two received packets belong to $C_2$, an event that happens with probability $1/4$.

On the other hand, for the grid code to succeed, there is much more flexibility in the possible received packets; more precisely, all the receiver states $(2,2,0,0)$, $(2,1,1,0)$, $(2,1,0,1)$ and $(2,0,1,1)$ are decodable. For instance, suppose that the next two received packets belong to $C_2$ and $C_3$, i.e., the receiver state is $(2,1,1,0)$. Decoding proceeds as follows. First, class 1 is decoded using Gaussian elimination, which yields uncoded packets $u_1$ and $u_2$. Since $u_1$ and $u_2$ are also from classes 3 and 4, respectively, the state is updated to $(2,1,2,1)$. Now class 3 can be decoded, uncovering packet $u_3$. Since $u_3$ also belongs to class 2, the state becomes $(2,2,2,1)$. Now class 2 is decoded, which finally reveals the last packet $u_4$, completing the decoding. Thus, if the initial state is $(2,0,0,0)$ and two more packets are received, the grid code succeeds with probability $4/10 > 1/4$.
\end{example}
\medskip

Let us now examine the issue of decoding complexity. We first describe an alternative way to view the decoding process. Note that, for each new packet $u_i$ that is recovered, one variable is effectively removed from the problem for all the remaining classes. Thus, rather than increasing $r_1,\ldots,r_L$ at each decoding iteration, we can equivalently decrease $d_1,\ldots,d_L$. This has precisely the same effect in the decoding condition $r_\ell \geq d_\ell$. More precisely, let $d_\ell^{(i)}$ denote the size of class $\ell$ (in terms of remaining variables) after the $i$th decoding iteration. Initially, $d_\ell^{(0)} = d_\ell$, for all $\ell$. After the $i$th iteration, when, say, class $\ell^*$ is decoded, we update $d_\ell^{(i)} = d_\ell^{(i-1)} - |C_\ell \cap C_{\ell^*}|$, for all classes that \emph{have not yet been decoded}. We keep $d_{\ell}^{(i)} = d_{\ell}^{(i-1)}$ for the decoded classes, since this tells us precisely the size of the problem that was solved for class $\ell$, i.e., how many packets had to be decoded by Gaussian elimination. Thus, at the end of the decoding process, say, after iteration $t$, we should have $\sum_{\ell=1}^L d_{\ell}^{(t)} = k$, which is precisely the total number of variables. Using this description of the decoding process, we can provide the following bound on the decoding complexity.

\medskip
\begin{theorem}\label{thm:complexity}
 Let $d_{\ell_1},\ldots,d_{\ell_L}$ denote the sizes of all classes sorted in decreasing order. The worst-cast decoding complexity $\chi$, in operations per symbol, is upper bounded by
  \begin{equation}\nonumber
    \chi \leq d_{\ell_t} + \frac{1}{k} \sum_{i=1}^{t-1} d_{\ell_i} (d_{\ell_i} - d_{\ell_t}) \leq d_{\ell_1}
  \end{equation}
where $t$ is the smallest integer such that $\sum_{i=1}^t d_{\ell_i} \geq k$.
\end{theorem}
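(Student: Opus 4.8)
The plan is to write the total decoding cost as a sum of per-class Gaussian-elimination costs, reduce the worst-case claim to a small linear program over the amount of ``fresh'' work done in each class, and solve that program by a greedy exchange argument.

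First I would fix the accounting introduced just before the statement. For each $\ell$ let $s_\ell$ denote the final value of $d_\ell^{(i)}$, i.e.\ the number of variables of class $\ell$ that are actually resolved by Gaussian elimination rather than by back-substitution (the size frozen when $\ell$ is decoded, and $0$ for a class cleared entirely by back-substitution). Two facts are immediate from the construction: $0 \le s_\ell \le d_\ell$, since the remaining size only decreases from $d_\ell^{(0)}=d_\ell$; and $\sum_{\ell=1}^{L} s_\ell = k$, which is exactly the identity $\sum_\ell d_\ell^{(i)} = k$ already observed, reflecting that each of the $k$ variables is first resolved in exactly one class. I would then bound the work for a single class: resolving the $s_\ell$ still-unknown variables of class $\ell$ takes at most $s_\ell$ pivoting steps, each updating the $m'$ data symbols of at most $d_\ell$ rows, for at most $d_\ell\, s_\ell\, m'$ multiplications. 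Summing over classes and dividing by the $k m'$ data symbols in the file yields the master inequality $\chi \le \frac{1}{k}\sum_{\ell} d_\ell\, s_\ell$.

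Next I would observe that every admissible decoding run produces a vector $(s_\ell)$ obeying the two constraints above, so it suffices to maximize $\frac{1}{k}\sum_\ell d_\ell\, s_\ell$ subject to $\sum_\ell s_\ell = k$ and $0 \le s_\ell \le d_\ell$. This is a linear program whose weight on $s_\ell$ equals its own upper bound $d_\ell$, so the budget $k$ should be spent on the largest classes first. I would make this rigorous by an exchange argument: if an optimal $(s_\ell)$ had an unsaturated large class ($s_a < d_a$) together with a strictly positive smaller class ($s_b>0$, $d_b < d_a$), then moving mass $\delta=\min(d_a-s_a,\,s_b)>0$ from $b$ to $a$ preserves feasibility and changes the objective by $\delta(d_a-d_b)>0$, a contradiction. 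Hence an optimum saturates the classes in decreasing-size order: with $d_{\ell_1}\ge\cdots\ge d_{\ell_L}$ it sets $s_{\ell_i}=d_{\ell_i}$ for $i<t$, $s_{\ell_t}=k-\sum_{i<t}d_{\ell_i}$, and $s_{\ell_i}=0$ for $i>t$, where $t$ is forced to be the smallest index with $\sum_{i\le t} d_{\ell_i}\ge k$, which guarantees $0< s_{\ell_t}\le d_{\ell_t}$. Substituting this maximizer and simplifying gives $d_{\ell_t}+\frac{1}{k}\sum_{i=1}^{t-1}d_{\ell_i}(d_{\ell_i}-d_{\ell_t})$, the first bound. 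The final inequality is a one-line relaxation: since $d_\ell\le d_{\ell_1}$ for all $\ell$, $\frac{1}{k}\sum_\ell d_\ell\, s_\ell \le \frac{d_{\ell_1}}{k}\sum_\ell s_\ell = d_{\ell_1}$.

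I expect the main obstacle to be justifying the per-class cost factor $d_\ell\, s_\ell$ rather than the naive $d_\ell^2$ or the optimistic $s_\ell^2$: one must argue that back-substitution legitimately drops the already-known columns from the pivoting loop (cutting the factor from $d_\ell$ to $s_\ell$) while conservatively charging $d_\ell$ rows per pivot to keep a genuine worst-case upper bound. The optimization half is routine once the constraints $\sum_\ell s_\ell = k$ and $s_\ell\le d_\ell$ are in hand; the only point needing care there is verifying that the greedy threshold coincides with the $t$ in the statement and that $s_{\ell_t}\in[0,d_{\ell_t}]$, both of which follow from the minimality of $t$.
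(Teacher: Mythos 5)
Your proposal is correct and follows essentially the same route as the paper's proof: the same per-class cost accounting (each class $\ell$ costs $d_\ell$ times its final residual size, summed and normalized by $k$), the same reduction to the linear program $\max \sum_\ell d_\ell s_\ell$ subject to $\sum_\ell s_\ell = k$, $0 \le s_\ell \le d_\ell$, and the same greedy maximizer. Your exchange argument and the direct bound $\frac{1}{k}\sum_\ell d_\ell s_\ell \le \frac{d_{\ell_1}}{k}\sum_\ell s_\ell = d_{\ell_1}$ merely make rigorous and slightly streamline steps the paper treats as immediate.
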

\begin{proof}
  Without loss of generality, suppose that classes are sorted according to the order in which they are decoded, i.e, class 1 is decoded first, then class 2, and so on. Let $t$ be number of iterations after which decoding is complete. Class 1 is decoded first, after which $d_2 - d_2^{(1)}$ uncoded packets are forwarded to class 2. By examining the matrix of the linear system that has to be solved for class $2$, it is easy to see that this system can be solved with precisely $d_2^{(1)}d_2 m = d_2^{(t)}d_2 m$ operations. In general, each class $\ell$ can be decoded with $d_\ell^{(t)} d_\ell m$ operations, giving a total complexity of
\begin{equation}\nonumber
  \frac{1}{k} \sum_{\ell=1}^L d_\ell^{(t)} d_\ell
\end{equation}
operations per symbol.

To obtain a bound, we need to maximize the function $\sum_{\ell=1}^L d_\ell x_\ell$, subject to the constraints $0 \leq x_\ell \leq d_\ell$, $\ell = 1,\ldots,L$, and $\sum_{\ell=1}^L x_\ell = k$. It is clear that this function is maximized by choosing $x_{\ell_i} = d_{\ell_i}$, $i=1,\ldots,t-1$, and $x_{\ell_t} = k - \sum_{i=1}^{t-1} d_{\ell_i}$, where $t$ is the smallest integer such that $\sum_{i=1}^t d_{\ell_i} \geq k$. Thus, we obtain
\begin{align}
\chi &\leq \frac{1}{k} \sum_{i=1}^{t-1} d_{\ell_i}^2 + \frac{1}{k} \left(k - \sum_{i=1}^{t-1} d_{\ell_i}\right) d_{\ell_t} \nonumber \\
&= d_{\ell_t} + \frac{1}{k} \sum_{i=1}^{t-1} d_{\ell_i} (d_{\ell_i} - d_{\ell_t}) \nonumber
\end{align}
with equality if $\ell_i = i$, $i=1,\ldots,t$, and $C_1,\ldots,C_t$ are disjoint.

The second expression follows from $d_{\ell_i} \leq d_{\ell_1}$ and $\sum_{i=1}^{t-1} d_{\ell_i} \leq k$. We have
\begin{equation}\nonumber
  \chi \leq d_{\ell_t} + \frac{1}{k} \sum_{i=1}^{t-1} d_{\ell_i} (d_{\ell_1} - d_{\ell_t}) \leq d_{\ell_1}.
\end{equation}
\end{proof}
\medskip

Theorem~\ref{thm:complexity} shows that the complexity is dominated by the largest $t$ classes and is \emph{not} increased by adding \emph{any} number of classes that are smaller than the largest $t$ classes. In particular, for a code with fixed-size classes, the complexity is never greater than that of the corresponding chunked code. For general codes, we should in fact expect a complexity much smaller than the bound of Theorem~\ref{thm:complexity}. This is because that bound is achieved when the first $t$ classes to be decoded are the largest ones and are disjoint, while in practice we would expect smaller classes to be decoded first and be back-substituted into larger ones.

%for disjoint classes; and smaller classes are usually more likely to be decoded first.
%In fact, we should expect a smaller complexity in general, since the bound in Theorem~\ref{thm:complexity} is achieved for disjoint classes.

%A simple upper bound is given by the complexity of performing Gaussian elimination on all classes, which is $O(\sum_{\ell=1}^L d_\ell^2 m)$. Let $K = \sum_{\ell=1}^L d_\ell$ and let $\alpha = K/k$. The parameter $\alpha$ gives the average number of classes to which a packet belongs; we call it the \emph{covering factor} of the code. In the special case that $d_\ell = d$ for all $\ell$, we have $\sum_{\ell=1}^L d_\ell^2 m = L d^2 m =  \alpha d km$. Thus, we see that, in this case, the complexity is increased by at most $\alpha$ times as compared with a chunked code. In practice, however, we should expect a much smaller complexity, since the back-substitution process essentially bypasses several steps of Gaussian elimination.

Evaluating the performance is a much harder issue. This is due to the fact that Gaussian elimination is combined with back substitution in a recurring manner, leading to an extremely intricate decoding process. Nevertheless, for simple cases, we can compute the performance exactly. Fig~\ref{fig:2x2} shows the exact probability of successful decoding versus overhead for the $2 \times 2$ grid code of Fig.~\ref{fig:grid2x2}.
\begin{figure}
\centering
\includegraphics[scale=0.9]{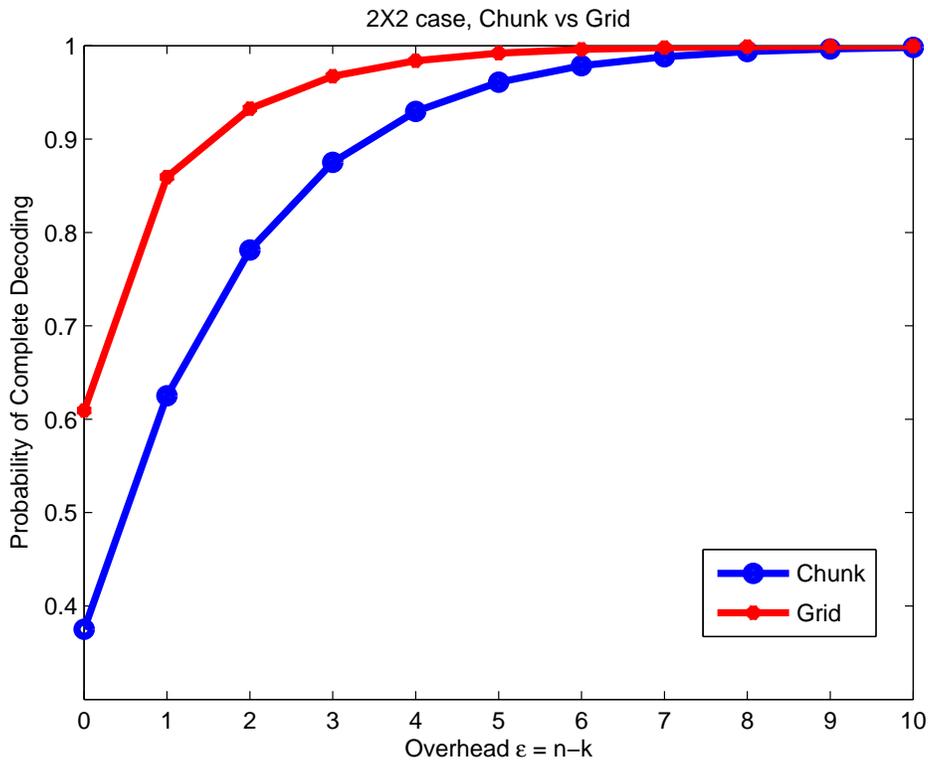}
%\vspace{-5mm}
\caption{Comparison between a $2 \times 2$ grid code and a chunked code with $d=2$ and $k=4$.}
\label{fig:2x2}
\vspace{-2ex}
\end{figure}
It can be seen that, for the same complexity, the performance of this grid code \emph{uniformly better} than that of the corresponding chunked code.

\section{Examples of Codes}
\label{sec:examples}

In this section, we present some examples of codes with overlapping classes. The performance of these codes will be investigated in Section~\ref{sec:performance}.

\medskip
\begin{definition}\label{def:rectangular-grid}
  Let $k = d d'$. A \emph{$d' \times d$ (rectangular) grid code} $\calC = \{C_1,\ldots,C_L\}$ consists of $L = d + d'$ classes given by
 \begin{align}
 C_i &= \{(i-1) d + j \mid j=1,\ldots,d\},\quad i=1,\ldots,d' \nonumber \\
 C_{d' + j} &= \{(i-1) d + j \mid i=1,\ldots,d'\},\quad j=1,\ldots,d. \nonumber
 \end{align}
\end{definition}
\medskip

Note that, in a $d' \times d$ grid code, the first $d'$ classes have size $d$ and form a partition of $\{1,\ldots,k\}$, while the last $d$ classes have size $d'$ and also form a partition of $\{1,\ldots,k\}$.

When all classes have the same size, i.e., $d' = d$, we obtain
\emph{square} grid codes. These codes are, unfortunately, too restrictive, since we must have $k = d^2$. A way to span a higher number of packets with fixed-size classes is provided by \emph{diagonal} grid codes. For convenience, in the next definitions, assume that packet and class indices are numbered starting at zero.

\medskip
\begin{definition}\label{def:diagonal-grid}
  Let $k = L_0 d$ and assume $L \leq (L_0)^2$. A \emph{$(k,d,L)$ diagonal grid code with angle set $\Theta = \{\theta_0,\ldots,\theta_{L_0-1}\}$} consists of $L$ classes given by
\begin{equation}\nonumber
  C_\ell = \left\{(i+j\theta_s) d + j \bmod k \mid j=0,\ldots,d-1 \right\},\quad%\\
  i=\ell \bmod L_0,\; s = \lfloor \ell/L_0 \rfloor,\; \ell = 0,\ldots,L-1.
\end{equation}
A \emph{diagonal grid code with angle $\theta$} is a diagonal grid code with angle set $\Theta = \{0,\theta,2\theta,\ldots\}$.
\end{definition}
\medskip

For $s = 0,\ldots,\lceil L/L_0 \rceil - 1$, the classes $C_{sL_0}$,$\ldots$,$C_{(s+1)L_0-1}$  form a partition of $\{0,\ldots,k-1\}$. In particular, the $L_0$ classes with angle $\theta_s = 0$ correspond to a chunked code. An example of a diagonal grid code is given in Fig.~\ref{fig:diagonal-grid}.
\begin{figure}
\centering
\includegraphics[scale=1.3]{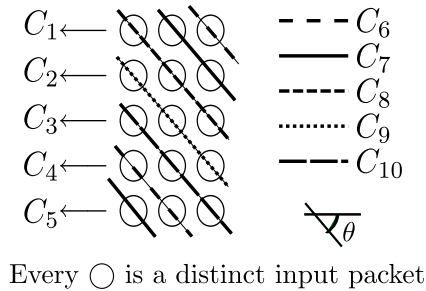}
%\vspace{-5mm}
\caption{Example of a $(15,3,10)$ diagonal grid code.}
\label{fig:diagonal-grid}
\vspace{-2ex}
\end{figure}

The design of a diagonal grid code minimizes the maximum size of the intersection of two classes. It is easy to see that, if all the nonzero $\theta_s$ are relatively prime to $L_0$, then any two distinct classes overlap in at least $\lfloor L/L_0 \rfloor - 1$ and at most $\lceil L/L_0 \rceil - 1$ classes. To see that this value is optimal, consider a bipartite graph with packets and classes as nodes, and an edge connecting a packet to a class if that packet belongs to that class. It follows that the maximum degree of a packet must be at least the average degree $(Ld)/(L_0 d) = L/L_0$.

Note that a diagonal grid code consists essentially of multiple layers of chunked codes each applied after the packets $0,\ldots,k-1$ undergo a certain (grid-like) permutation. Thus, the construction of Definition~\ref{def:diagonal-grid} can be generalized by using arbitrary permutations. For $s = 0,\ldots,\lceil L/L_0 \rceil - 1$, let $\pi_s$ be a permutation of $\{0,\ldots,k-1\}$. Then we may consider a code with $L$ classes of size $d$ given by
\begin{equation}\nonumber
  C_\ell = \left\{\pi_s(i d + j) \mid j=0,\ldots,d-1 \right\},\quad%\\
  i=\ell \bmod L_0,\; s = \lfloor \ell/L_0 \rfloor,\; \ell = 0,\ldots,L-1
\end{equation}
where $L_0 = k/d$. Without loss of generality, we will assume that $\pi_0$ is the identity permutation. If all the remaining permutations are chosen uniformly at random, we will call the resulting code a \emph{random-layer code}.

For generality, in all the codes described above, we have left the probability distribution $\{p_\ell\}$ unspecified. However, in the case that all classes have a constant size $d$, it is quite natural to use a uniform distribution $p_\ell = 1/L$ for all $\ell$. More generally, we see no reason to assign different probabilities for classes of the same size, and we will use this assumption in all the experiments in the next section. %[[and our experiments seem to confirm this hypothesis??]]

\section{Performance Evaluation}
\label{sec:performance}

In this section, we use simulations to evaluate the performance of the codes described in the previous section.

We make the following assumptions:
\begin{enumerate}
  \item All received packets are linearly independent whenever possible, i.e., $\rank A^{(\ell)}B^{(\ell)} = \min\{n_\ell, d_\ell\}$, for all $\ell$.
%  \item For all $\ell$, $P[\text{$y_j$ belongs to $C_\ell$}] = p_\ell$.
  \item The probability that a received packet belongs to class~$\ell$ is exactly equal to $p_\ell$, for all $\ell$.
\end{enumerate}

Note the two assumptions above concern themselves with the network topology and the network code, and they are required if we wish to pursue an analysis that is independent of the network. Assumption 1 implies that the source node must generate a sufficient number of packets from each class ($N_\ell \geq d_\ell$) and that both the encoding at the source node and the network code must not introduce any linear dependence on any set of up to $d_\ell$ received packets. Assumption 2 means that the network preserves the designed probability distribution on classes. Both assumptions should hold true if $q$ and each $d_\ell$ are sufficiently large. In order to satisfy this requirement, we assume that a parameter $d_{\min}$ is given such that any valid code must satisfy $d_\ell \geq d_{\min}$, for all $\ell$. Specifically, we consider $d_{\min} = 25$ in the following results. Note that the value of $q$ does not affect code design.

Performance is evaluated in terms of the complexity-overhead tradeoff. Since the problem is inherently delay-tolerant---each receiver is interested in receiving the complete file with probability 1, no matter how long it takes---the two main figures of merit are the expected complexity and the expected overhead. Note that the figure of expected overhead automatically incorporates the probability of failure for each specific overhead, therefore eliminating the need to consider a three-dimensional tradeoff space.

Fig.~\ref{fig:plot1} shows how complexity is traded off against overhead in a chunked coded. %
\begin{figure}
\centering
\includegraphics[scale=0.9]{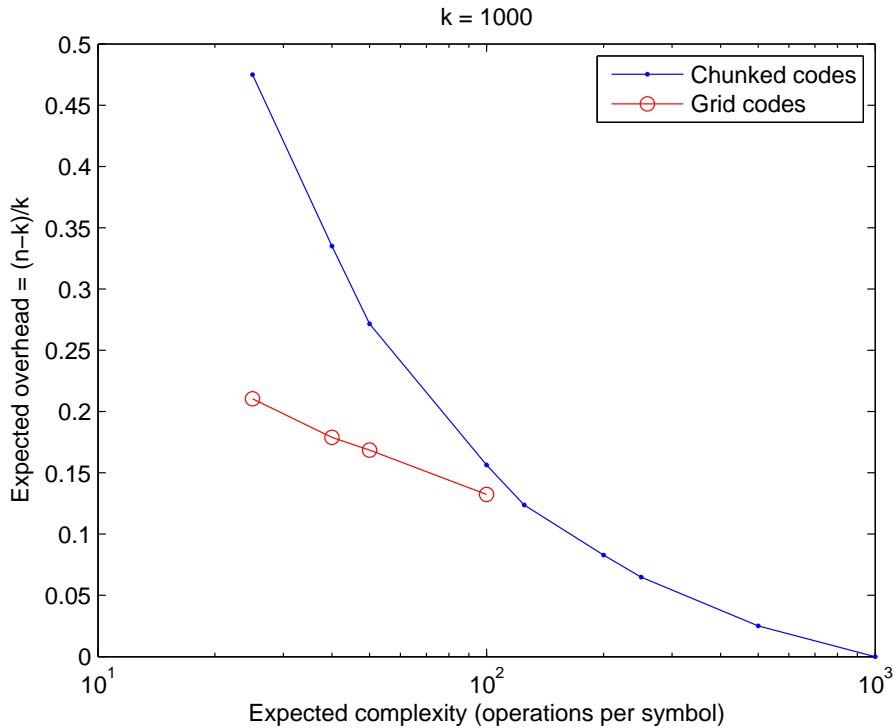}
%\vspace{-5mm}
\caption{Performance of chunked codes and diagonal grid codes for $k=1000$.}
\label{fig:plot1}
\vspace{-2ex}
\end{figure}
At one extreme, we have a dense code with a single class of size $d=k$; this code has optimal overhead but prohibitively large complexity. At the other extreme we have a chunked code with class size $d = d_{\min}$, which attains the minimum possible complexity at the expense of a large overhead. As shown in Fig.~\ref{fig:plot1}, for small to moderate complexity, diagonal grid codes can outperform chunked codes by a large margin. Note that the complexity of diagonal grid codes is precisely equal to the class size $d$. The number of classes $L$ for each grid code has been tuned experimentally to maximize the performance for the given parameters. From left to right, the points in Fig.~\ref{fig:plot1} correspond to $L=28$, $12$, $9$, $2$.

Fig~\ref{fig:plot2} shows similar results for a scenario where $k=4096$. As one can see, well-designed grid codes significantly outperform chunked codes. From left to right, the grid codes in the figure have $L=207$, $92$, $43$. Fig~\ref{fig:plot2} also shows the performance of codes with varying class sizes, referred to as mixed codes.
\begin{figure}
\centering
\includegraphics[scale=0.9]{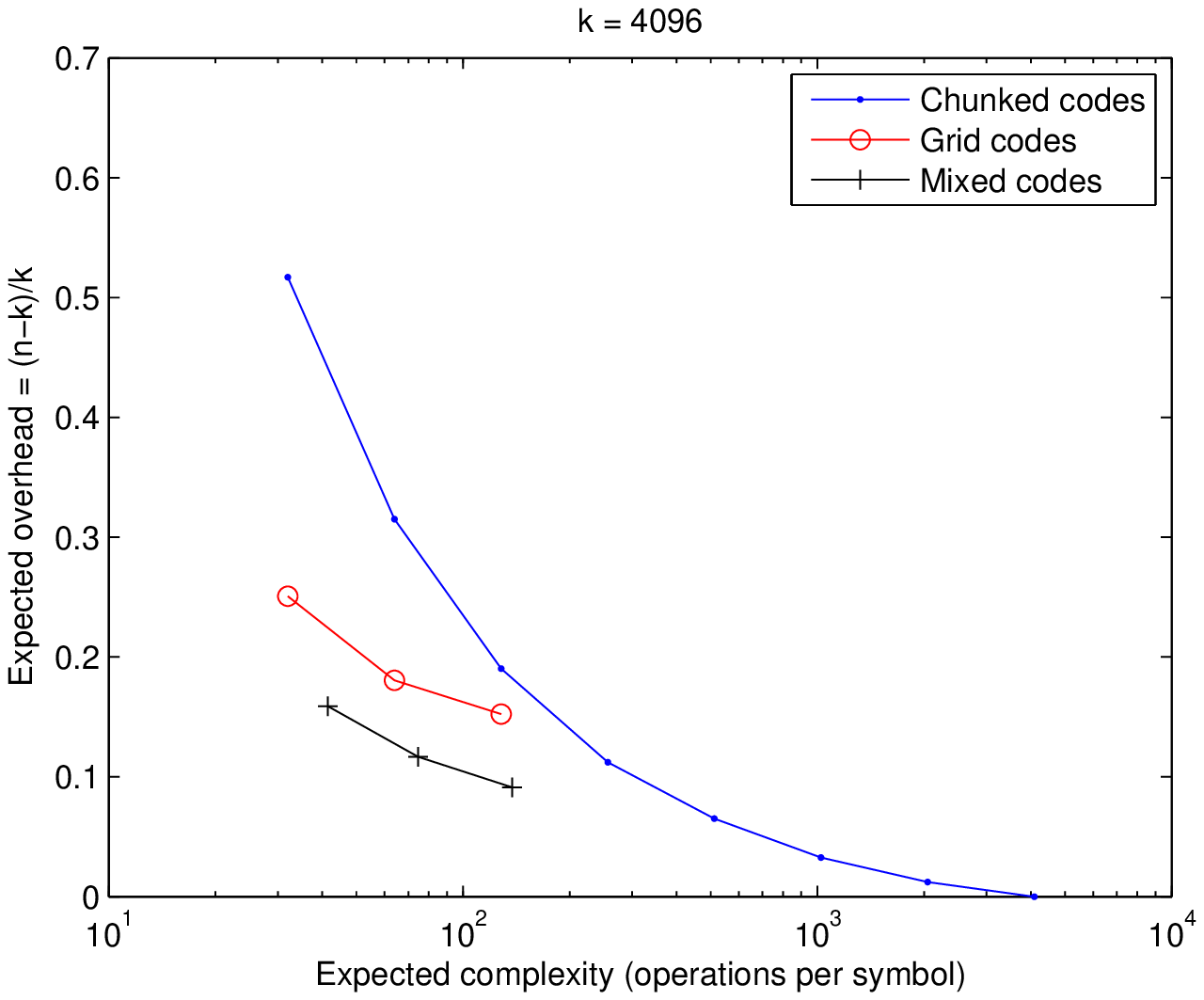}
%\vspace{-5mm}
\caption{Performance of chunked codes, diagonal grid codes and mixed codes for $k=4096$.}
\label{fig:plot2}
\vspace{-2ex}
\end{figure}
From left to right, these codes are: a $(4096,32,200)$ diagonal grid code with an additional random class of size 2048; a $(4096,64,86)$ diagonal grid code with an additional random class of size 1024; and a $(4096,128,38)$ diagonal grid code with an additional random class of size 512. In all cases, the distribution $\{p_\ell\}$ used is the uniform one. In comparison with their corresponding grid codes, the mixed codes exhibit a significantly lower overhead with only a marginal increase in complexity.
As discussed in Section~\ref{sec:sparse-overlapping}, this is due to the fact that the extra (large) class is typically decoded only after many other (smaller) classes have been decoded and back-substituted. The effect of a large class is analogous to that of a high degree check in LT codes: establishing a ``bridge'' between non-overlapping classes and thus allowing the decoding ``ripple'' \cite{Shokrollahi2006} to be maintained for a longer time.

Our results show that, for a fixed expected complexity, the use of overlapping classes can reduce the expected overhead by up to 70\%.

\section{Concluding Remarks}
\label{sec:conclusion}

This paper presents a novel approach to network coding based on the concept of overlapping classes. The approach generalizes chunked coding and allows a propagative decoder that enjoys many of the benefits of fountain codes. Our proposed scheme, while still in its initial stages, seems to be a promising step towards a full network coding solution to peer-to-peer file distribution. More generally, our approach seems to be suitable for any application that would benefit from a combination of fountain coding and network coding.

We remark that, while our analysis assumes no feedback between nodes, nothing prevents a protocol based on our scheme from using control messages to aid the communication. By carefully designing the amount of protocol overhead, the overall performance of the scheme may be further increased.

Our main objective with this paper has been to suggest a new possible direction for research in network coding, as more questions remain open than have been answered here (especially in the theoretical side). In particular, the design of good codes with constant or non-constant class sizes (and possibly nonuniform distribution) is an important open problem. Due to the recursive nature of the decoding process, the development of analytical bounds on performance also remains elusive at this point. We hope to address both problems in our future work.

\bibliographystyle{IEEEtran}
\bibliography{IEEEabrv,networkcoding,codingtheory,rankmetric,silva}

\end{document}